\newcommand{\ang}[1]{\langle #1\rangle}
\renewcommand{\ang}[1]{\langle #1\rangle}
\newcommand{\RE}{\mathbb{R}}
\newcommand{\bd}{\partial\kern+1pt}     
\newcommand{\SP}{\kern+1pt}             
\newcommand{\PERP}{\kern-2pt \perp \kern-2pt} 
\DeclareMathOperator{\interior}{int}
\title{On The Heine-Borel Property and Minimum Enclosing Balls}
\titlerunning{On The Heine-Borel Property and Minimum Enclosing Balls}
\author{Hridhaan Banerjee}{Thomas Jefferson High School for Science and Technology, Virginia, USA \and \url{~}}{hridhaan.s.banerjee@gmail.com}{}{}
\author{Carmen Isabel Day}{California State University Channel Islands, California, USA  \and \url{~}}{carmen.isabel.day@gmail.com}{}{}
\author{Megan Hunleth}{Montgomery Blair High School, Silver Spring, Maryland, USA  \and \url{~}}{megan@hunleth.com}{}{}
\author{Sarah Hwang}{Department of Computer Science, University of Maryland, College Park, USA \and \url{~}}{shwang18@terpmail.umd.edu}{}{}
\author{Auguste H. Gezalyan}{Department of Computer Science, University of Maryland, College Park, USA \and \url{~}}{octavo@umd.edu}{https://orcid.org/0000-0002-5704-312X}{}
\author{Olya Golovatskaia}{Mount Holyoke College, Massachusetts, USA\and \url{~}}{olga.golovatskaia@gmail.com}{}{}
\author{Nithin Parepally}{Department of Computer Science, University of Maryland, College Park, USA \and \url{~}}{nparepa@terpmail.umd.edu}{}{}
\author{Lucy Wang}{Princeton University, New Jersey, USA \and \url{~}}{lucywangj@gmail.com}{}{}
\author{David M. Mount}{Department of Computer Science, University of Maryland, College Park, Maryland, USA \and \url{https://www.cs.umd.edu/~mount/}}{mount@umd.edu}{https://orcid.org/0000-0002-3290-8932}{}
\authorrunning{Banerjee, Day, Hunleth, Hwang, Gezalyan, Golovatskaia, Parepally, Wang, and Mount}
\keywords{Hilbert metric, Thompson metric, Heine-Borel property, convexity, LP-type problems}
\date{\today}
\begin{document}

\maketitle

\begin{abstract}
In this paper, we contribute a proof that the problem of determining the minimum radius balls over metric spaces with the Heine-Borel property is always LP-type. Additionally, we prove that weak metric spaces, those without symmetry, also have this property if we fix the direction in which we take their distances from the centers of the balls. We use this to prove that the minimum radius ball problem is also LP-type in the Thompson metrics and Funk weak metric. We finally examine the LP algorithm and explicit primitives for computing the minimum radius ball in the Hilbert metric. 
\end{abstract}

\section{Introduction}
The concept of an LP-type problem originates with the work of Micah Sharir and Emo Welzl in 1992 in their paper "A combinatorial bound for linear programming and related problems"\cite{sharir1992combinatorial}. Since then, a variety of problems have been shown to be LP-type, including minimum radius balls in the Euclidean metric \cite{matouvsek1992subexponential} and spaces with Bregman divergences \cite{nielsen2008smallest}, finding the closest distance between two convex polygons \cite{matouvsek1992subexponential}, various game-theoretic games \cite{halman2007simple} such as simple stochastic games and parity games. The analysis of LP-type problems has been a continuous area of study in computational geometry since their inception. We contribute to this by proving that determining the minimum radius ball in any metric space satisfying the Heine-Borel property is LP-type. We provide an example with the Hilbert and Thompson metrics. We also show that this holds for weak metric spaces, that is, where the distance function fails to be symmetric.

The Hilbert metric originates from the work of David Hilbert in 1895 in relation to Hilbert's fourth problem~\cite{hilbert1895linie}. It presents a non-Euclidean metric in which the triangle inequality is not strict. The Hilbert metric generalizes the Cayley-Klein model of hyperbolic geometry to arbitrary convex bodies in an $n$-dimensional space. Definitions will be given in Section \ref{sec:prelims}. Given a convex body, $\Omega \subset \RE^n$, the Hilbert metric has many desirable properties, such as the fact that straight lines are geodesics and are preserved under projective transformations. In the probability simplex, it provides a natural distance between discrete probability distributions, as shown by the works of Nielsen and Sun \cite{nielsen2019clustering, nielsen2022nonlinear}. For an excellent resource on Hilbert geometry, see "From Funk to Hilbert Geometry" or the "Handbook of Hilbert Geometry", both by Papadopoulos and Troyanov\cite{papadopoulos2014handbook, papadopoulos2014funk}. 

The Hilbert metric has seen recent use in a diverse set of fields, especially in that of convexity approximation. This is in particular due to its relationship with Macbeath regions (which are equivalent to Hilbert balls up to a scaling factor) \cite{abdelkader2018delone,abdelkader2024convex} and the flag approximability of polytopes \cite{vernicos2018flag}. Other fields in which the Hilbert metric has been used are quantum information theory on convex cones defined by various operators\cite{reeb2011hilbertquantum}, machine learning in the form of clustering and graph embeddings \cite{nielsen2019clustering,nielsen2022nonlinear}, optimal mass transport \cite{chen2016entropic}, and a variety of situations in real analysis \cite{lemmens2013birkhoff}. Due to its many uses, various algorithms from classical computational geometry have been modified for use in the Hilbert metric, including Voronoi diagrams \cite{gezalyan2023voronoi, bumpus2023software} and Delaunay triangulations \cite{gezalyan2024delaunay}. We expand on these works by contributing an algorithm for minimum enclosing balls in the Hilbert polygonal geometry.

The Thompson metric was defined by A. C. Thompson in 1963 as an alternative to the Hilbert metric for its applications in analysis\cite{thompson1963certain}. Like the Hilbert, it provides a metric space over convex bodies and has a similar geometry. Its primary uses are in analysis, in particular as a metric on cones \cite{cobzacs2014normal,lemmens2015unique,lemmens2017midpoints,lim2013geometry}. We contribute a proof that minimum radius balls in this metric are LP-type. Additionally, we contribute the fact that Thompson balls are, like Macbeath Regions, also equivalent to Hilbert balls up to a scaling factor, and therefore induce the same topology on convex bodies.

The Funk weak metric was defined by Paul Funk in 1929 \cite{funk1929geometrien}. The Funk metric is a weak metric space that can be used to define both the Hilbert and the Thompson metrics, and is often studied in their context \cite{papadopoulos2014funk,troyanov2013funk} and in the context of flags of polytopes \cite{faifman2023volume}. Because the Funk metric is non-symmetric, it induces a reverse metric called the reverse Funk metric. We contribute a proof that the minimum radius ball problem in these weak metrics are LP-type.

\section{Preliminaries} \label{sec:prelims}
\subsection{Metric Spaces}

A metric space is the generalization of a distance $d:\mathbf{X}\times \mathbf{X} \rightarrow \RE^{\geq 0}$ on a set $\mathbf{X}$. It is a fundamental concept in geometry, real/complex analysis, and topology. We define it here:

\begin{definition}[Metric Space]
    The pair $(\mathbf{X},d)$ is a \textbf{metric space} if, for all $ a,b,c \in \mathbf{X}$:
    \begin{enumerate}
        \item $d(a,b) = 0$ iff $a=b$
        \item $d(a,b) = d(b,a)$
        \item $d(a,c)\leq d(a,b)+d(b,c)$
    \end{enumerate}
\end{definition}

When all of the above properties are satisfied except for symmetry, the space is called a \emph{weak metric}. Given a metric space, $(\mathbf{X},d)$, the closed ball of radius $r$ around a point $p$ generalizes the Euclidean idea of a circle to an arbitrary metric space. We define it here: 

\begin{definition} [Closed Ball]\label{def:ClosedBall}
A \textbf{closed ball} around a point $p$ in a metric space $(\mathbf{X},d)$ of radius $r$ is defined: 
\[B(p,r)=\{q\in \mathbf{X} \mid d(p,q)\leq r\}\] 
See Figure \ref{fig:FourBalls} for the closed balls in the four mentioned metrics.
\end{definition}
\label{def:MinimumRadiusBall}

A large set of metric spaces have a useful property known as the \textbf{Heine-Borel property}. A metric space is said to have the \textbf{Heine-Borel property} property if \emph{every closed and bounded set in the metric space is compact} \cite{williamson1987constructing}.

\subsection{The Hilbert and Thompson Metrics, and the Funk Weak Metric}

The Hilbert and Thompson metrics, and the Funk weak metrics, are defined on the points in the interior of a convex body $\Omega$ in $\RE^d$, where $\Omega$ is a closed and bounded convex set. Unless otherwise stated, we assume that $\Omega$ is a convex polygon with $m$ edges. We let $\bd \Omega$ refer to the boundary of $\Omega$. Given any pair of points $p$ and $q$ within the interior of $\Omega$, we define $\overline{p q}$ to be the cord of $\Omega$ through those two points, and $\chi(p,q)$ to be the directed ray from $p$ through $q$. Unless otherwise stated, when we take $p,q \in \Omega$, we mean the interior of $\Omega$. 

\begin{figure}[htbp]
\centerline{\includegraphics[scale=0.7]{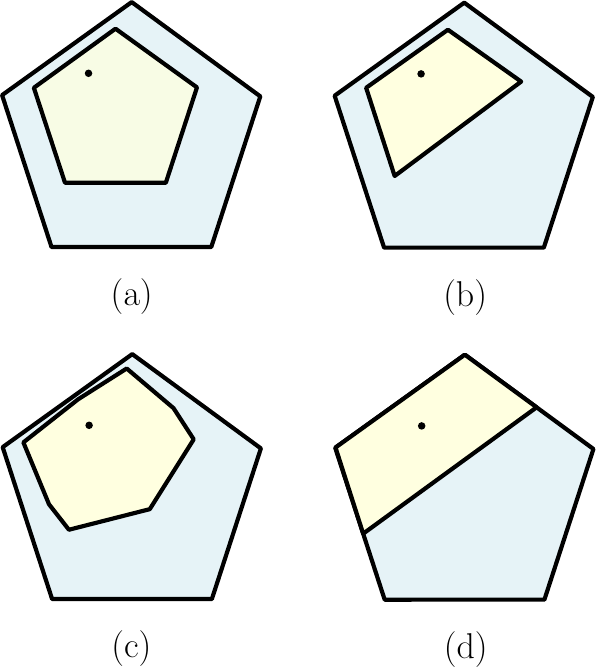}}
  \caption{(a) Funk ball, (b) Thompson ball (c) Hilbert ball, and (d) reverse Funk ball.}
  \label{fig:FourBalls}
\end{figure}

\begin{definition}[Funk weak metric]
Given two points $p,q$ in a convex polygon $\Omega$ in $\RE^n$ such that $\chi(p,q)$ intersects $\bd \Omega$ at a point $q'$ on $\bd\Omega$, we define the \textbf{Funk weak metric} to be:
\[
    F_\Omega(p,q)
         ~ = ~ \ln \frac{\|p - q'\|}{\|q - q'\|},
\]
where $F_\Omega(p,q)=0$.

\end{definition}

The above definition is also sometimes called the \emph{forward Funk metric}. Note however that the Funk weak metric is an asymmetric metric. Therefore, its reverse, the \emph{reverse Funk metric}, is defined to be $rF_\Omega(p,q)=F_\Omega(q,p)$. The \emph{Hilbert metric} can be defined as the average of the forward and reverse Funk metrics.

\begin{definition}[Hilbert metric] 
Given two distinct points $p,q$ in a convex polygon $\Omega$ in $\RE^n$, let $p'$ and $q'$ denote the endpoints of the chord $\overline{pq}$ on $\bd (\Omega)$, so that the points lie in order $\ang{p', p, q, q'}$, the \textbf{Hilbert distance} between $p$ and $q$, $H_\Omega(p,q)$, is:\[
    H_{\Omega}(p,q)
        ~ = ~ \frac{1}{2} \ln \left( \frac{\|q - p'\|}{\|p - p'\|} \frac{\|p - q'\|}{\|q - q'\|} \right),
\]
where $H_{\Omega}(p,p) = 0$.

\end{definition}

Since the product in the definition of the Hilbert metric is the cross ratio, which is preserved under projective transformations, it follows that the Hilbert metric is invariant under projective transformations\cite{nielsen2017balls}. It is also worth noting that straight lines are geodesics in the Hilbert metric, though not all geodesics are straight lines, and that the Hilbert metric satisfies all the properties of a metric, including symmetry and the triangle inequality. Nielsen and Shao showed how to algorithmically compute Hilbert balls with the aid of spokes. 

\begin{definition}[Spoke]
    Given $p \in \Omega$ a \textbf{spoke} through $p$ from a vertex $v$ of $\Omega$ is $\overline{pv}$.
\end{definition}
 
Succinctly, a Hilbert Ball of radius $r$ around a point $p$ can be constructed by computing all the points that are a distance of $r$ away from $p$ along the spokes of $p$ from the vertices of $\Omega$ and forming their convex hull (see Figure \ref{fig:FourBalls}(c)).

\begin{lemma}[Nielsen and Shao~\cite{nielsen2017balls}]
    Hilbert balls have at most $O(m)$ sides. 
\end{lemma}

The Thompson metric is similar to the Hilbert in its construction. It is the maximum of both the Funk and reverse Funk metrics. 

\begin{definition}[Thompson metric] 
Given a bounded closed convex body $\Omega$ in $\RE^n$ and two points $p,q \in \interior (\Omega)$, let $p'$ and $q'$ denote the endpoints of the chord $\overline{pq}$ on $\bd (\Omega)$, so that the points lie in order $\ang{p', p, q, q'}$, the \textbf{Thompson distance} between $p$ and $q$ $T_\Omega(p,q)$ is:\[
    T_{\Omega}(p,q)
        ~ = ~ max \left( \ln\frac{\|q - p'\|}{\|p - p'\|},\ln \frac{\|p - q'\|}{\|q - q'\|} \right),
\]
where $T_{\Omega}(p,p) = 0$.

\end{definition}

As it is the maximum of the two Funk weak metrics in a convex body, certain facts about this metric are clear. 

\begin{lemma}
    Balls in the Thompson metric are convex polygons with $O(m)$ sides.
\end{lemma}

\begin{proof}
    This follows directly from the fact that balls in the Funk metric are homotheties (the reverse is flipped) of $\Omega$ with $O(m)$ sides \cite{papadopoulos2014funk} (see Figure \ref{fig:FourBalls} (a) and (d)) and Thompson balls are their intersections (see Figure \ref{fig:FourBalls}(b)).
\end{proof}

\begin{theorem}
    The topology induced by the Hilbert or Thompson metrics, as well as the Funk weak metrics in a bounded convex domain $\Omega \subset \RE^n$ coincides with the Euclidean topology in that domain.
\end{theorem}

\begin{proof}
This is a well known result in the field. For reference, this follows from Theorem $2.1$ in \cite{beardon1999klein} where the authors show that the Hilbert and Euclidean metrics induce the same topology on bounded convex domains in $\RE^n$. In addition, refer to Proposition $6.1$ in "From Funk to Hilbert Geometry" for Funk weak metrics.  \end{proof}

From this we immediately have the following corollary. 
\begin{corollary}
    \label{lem:HilbertHB}
    Any Hilbert or Thompson metric space, as well as any Funk weak metric space, has the Heine-Borel property. 
\end{corollary} 

\section{Minimum Enclosing Radius
Balls and the Heine-Borel Property}

We begin by defining several concepts such as the minimum radius ball, and a set of criterion for being an LP-type problem\cite{sharir1992combinatorial}. 

\begin{figure}[htbp]
\centerline{\includegraphics[scale=0.7]{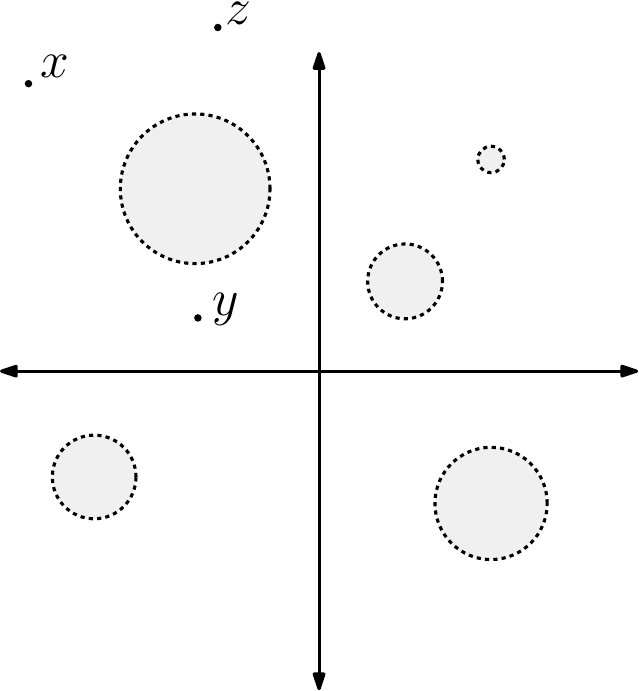}}
  \caption{This is $\RE^2$ with several closed balls removed. There is no Euclidean minimum radius ball through $x,y,z$.}
  \label{fig:RemovedDisks}
\end{figure}

\begin{definition}[Minimum Radius enclosing Ball]\label{def:minRadius}
    A closed ball $B$ is a \textbf{minimum radius enclosing ball} of $S \subset \mathbf{X}$ when $B$ has radius $\inf\{r\mid S\subset B(p,r)\}$. Note that in a general metric space, the  \textbf{minimum radius enclosing ball} need not exist, nor be unique.
\end{definition}

 \begin{definition}[Minimum Ball Property]
A metric space $(\mathbf{X}, d)$ satisfies the \textbf{minimum ball property} if for every finite $H\subset\mathbf{X}$ there exists a minimum radius enclosing ball of $H$.
 \end{definition}

 For an example of a metric space without the minimum ball property see Figure \ref{fig:RemovedDisks}. In this situation several closed disks were removed from $\RE^2$ with the Euclidean distance.
 
\begin{proposition}
\label{prop:HeineBorelMinimumBallProperty}
Let $(X, d)$ be a metric space. If $(X,d)$ satisfies the Heine-Borel property, then $(X,d)$ satisfies the minimum ball property. 
\end{proposition}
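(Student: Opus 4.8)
The plan is to reduce the existence of a minimum enclosing ball to an optimization over centers and then invoke compactness. Write $H = \{x_1, \dots, x_n\}$ and, for each candidate center $p \in X$, define the enclosing radius $r(p) = \max_{1 \le i \le n} d(p, x_i)$; this is the smallest radius of a ball centered at $p$ that contains $H$, so the infimum $\inf\{r \mid H \subset B(p,r)\}$ appearing in Definition~\ref{def:minRadius} equals $r^* := \inf_{p \in X} r(p)$. Producing a minimum enclosing ball therefore amounts to exhibiting a point $p^*$ with $r(p^*) = r^*$, since then $H \subset B(p^*, r^*)$ and the radius of this ball matches the required infimum.

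First I would observe that $r^*$ is finite, since fixing any $p_0 \in X$ gives $r^* \le r(p_0) < \infty$ because $H$ is finite. Next, take a minimizing sequence $(p_k)$ with $r(p_k) \to r^*$; discarding finitely many terms we may assume $r(p_k) \le r^* + 1$ for all $k$, and in particular $d(x_1, p_k) \le r(p_k) \le r^* + 1$. Hence every $p_k$ lies in the closed ball $K = B(x_1, r^*+1)$. This set is bounded (its diameter is at most $2(r^*+1)$ by the triangle inequality) and closed (it is the preimage of $[0, r^*+1]$ under the continuous map $q \mapsto d(x_1, q)$), so the Heine-Borel property makes $K$ compact. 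Thus $(p_k)$ admits a subsequence converging to some $p^* \in K$.

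Finally I would pass to the limit. Each map $p \mapsto d(p, x_i)$ is $1$-Lipschitz, hence continuous, so $r(\cdot)$, being the maximum of finitely many continuous functions, is continuous; evaluating along the convergent subsequence yields $r(p^*) = \lim_k r(p_k) = r^*$. Consequently $d(p^*, x_i) \le r^*$ for every $i$, so $H \subset B(p^*, r^*)$, and $B(p^*, r^*)$ is a minimum radius enclosing ball of $H$. The only real content is the middle step, namely confining the minimizing sequence to a single closed and bounded set so that Heine-Borel can be invoked; I expect that to be the main (if modest) obstacle, while the remainder is the standard fact that a continuous function attains its infimum on a compact set.
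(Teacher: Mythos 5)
Your proof is correct, but it takes a genuinely different route from the paper's. You run the classical extreme-value argument: observe that the farthest-point function $r(p)=\max_i d(p,x_i)$ is continuous (being a finite maximum of $1$-Lipschitz maps), confine a minimizing sequence to the closed bounded set $B(x_1,r^*+1)$, invoke Heine--Borel plus the equivalence of compactness and sequential compactness in metric spaces, and pass to the limit along a convergent subsequence. The paper instead avoids sequences and continuity altogether: it forms the nested family of center sets $C_n=\{p : H\subset B(p,r+1/n)\}$, checks each is closed and bounded directly from the triangle inequality, and extracts an optimal center from $D=\bigcap_n C_n$ via Cantor's intersection theorem. Your version is arguably the more standard and economical argument for the proposition as stated, and the continuity of $r(\cdot)$ is a reusable fact in its own right. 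The paper's version buys two things. First, it uses only the triangle inequality and never symmetry of $d$ nor the symmetry-dependent Lipschitz estimate $|r(p)-r(q)|\le d(p,q)$, which is exactly what lets the authors promote the result verbatim to weak metrics in Corollary~\ref{cor:WeakHeine-Borel}; your limit step $r(p^*)\le d(p^*,p_k)+r(p_k)$ and the notion of subsequential convergence would need a one-sided reworking in that setting, since without symmetry the two Lipschitz bounds $r(p)-r(q)\le d(p,q)$ and $r(q)-r(p)\le d(q,p)$ are genuinely different. Second, the explicit construction of the compact set $D$ of all optimal centers is reused later in Proposition~\ref{prop:WelltoLexOrder} to select a lexicographically minimal center, whereas your argument produces a single minimizer without exhibiting the full set. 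For the proposition exactly as stated, though, your argument is complete and sound.
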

\begin{proof}
    For a given finite $H \subset \mathbf{X}$ define: 
    \[S := \{  s \in \mathbb{R}^{\ge 0} \mid \text{ there exists a closed ball of radius } s \text{ enclosing } H \}
    ~~\text{and}~~
    r := \inf S
    \]
    Note that $S$ is bounded below by 0 and is nonempty because $H$ is bounded as a consequence of being finite. So the infimum is defined and finite. Thus, to show there exists a ball of minimal radius, we want to show that there exists a ball of radius $r$ enclosing $H$. 
Define the set of centers $C_n := \{ p \in \mathbf{X} : H \subset B(p, r + 1/n)\}$. For all $n \in \mathbb{N}$ we have $C_n \supset C_{n+1}$ because for all $p \in C_{n+1}$ we have $H \subset B(p, r + 1/(n+1)) \subset B(p, r+ 1/n)$ and thus $p \in C_n$ as well. 

Each $C_n$ is bounded since for any $h \in H$ and any $p \in C_n$, $d(p,h) \le r+1/n$. 

We will now show that each $C_n$ is closed. Let $h\in H$ and $\{p_i\}_{i \in \mathbb{N}} \subset C_n$  such that $p_i \rightarrow p$, $p \in \mathbf{X}$. Then for any $\varepsilon >0$, there exists $i$ such that $d(p,p_i)<\varepsilon$. So by the triangle inequality, $d(p, h) \le d(p,p_i)+d(p_i,h)  \le (r+1/n)+\varepsilon$ for any $\varepsilon$, so $d(p,h) \le r+1/n$. So $h \in B(p, r+ 1/n)$ for any $h \in H$. Thus $p \in C_n$. So $C_n$ contains all of its limit points and is thus closed. 

Define $D: = \bigcap_n C_n$. Since we are assuming the Heine-Borel property and showed each $C_n$ is closed and bounded, each $C_n$ is compact. Because $\{C_n\}$ is a decreasing nested sequence of compact sets of $\mathbf{X}$ with the metric topology, by Cantor's intersection theorem $D$ is nonempty \cite{rudin1964principles}. 

Fix $p \in D$. Then for any $h \in H$, $d(p,h) \le r+1/n$ for all $n\in \mathbb{N}$. Thus $d(p,h) \le r$. We conclude $H \subset B(p,r)$. 
\end{proof}
Note that the converse does not hold. As a counterexample, any infinite set equipped with the discrete metric does not satisfy the Heine-Borel property, yet it satisfies the minimum ball property.
Additionally, note that in the above proof of Theorem \ref{prop:HeineBorelMinimumBallProperty} we did not use the symmetric property of the metric space, only the triangle inequality. As such, given the same definition of a ball Definition \ref{def:ClosedBall} and Definition \ref{def:minRadius} e.g. fixing the direction of the ball, we contribute the following corollary. 

\begin{corollary}\label{cor:WeakHeine-Borel}
    Let $(X, d)$ be a weak metric space. If $(X,d)$ satisfies the Heine-Borel property, then $(X,d)$ satisfies the minimum ball property. 
\end{corollary}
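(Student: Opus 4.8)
The plan is to observe that the proof of Proposition~\ref{prop:HeineBorelMinimumBallProperty} transfers essentially verbatim: nowhere in that argument did we appeal to the symmetry axiom $d(a,b)=d(b,a)$. The only metric properties actually used are nonnegativity (so that $\inf S$ is finite), the definition of the closed forward ball $B(p,r)=\{q\ST d(p,q)\le r\}$ (to obtain the nesting $C_n\supset C_{n+1}$), and the triangle inequality in its ordered form $d(a,c)\le d(a,b)+d(b,c)$ (to obtain closedness of each $C_n$). Since a weak metric space retains all three, I would simply re-run the same construction.

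Concretely, for a finite $H\subset\mathbf{X}$ I would set $S:=\{s\ge 0\ST H\subset B(p,s)\text{ for some }p\in\mathbf{X}\}$ and $r:=\inf S$, which is finite and well defined exactly as before ($S$ is bounded below by $0$ and is nonempty because $H$ is finite). I would then form the nested family $C_n:=\{p\ST H\subset B(p,r+1/n)\}$, verify $C_n\supset C_{n+1}$ from the monotonicity of balls in the radius, and check that each $C_n$ is closed and bounded. Invoking the Heine-Borel property, each $C_n$ is then compact, so Cantor's intersection theorem yields a point $p\in D:=\bigcap_{n\in\NN}C_n$. For this $p$ we have $d(p,h)\le r+1/n$ for every $h\in H$ and every $n$, hence $d(p,h)\le r$, i.e.\ $H\subset B(p,r)$, which is the desired minimum radius enclosing ball.

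The step that requires genuine care---and the reason the statement specifies that we are \emph{fixing the direction of the ball}---is making the directional notions consistent. Because $d$ is asymmetric, the words \emph{bounded}, \emph{closed}, and ``$p_i\to p$'' must all be read relative to the forward balls that define $B(p,r)$. The delicate point is the closedness argument, which applies the triangle inequality in the form $d(p,h)\le d(p,p_i)+d(p_i,h)$ with $d(p_i,h)\le r+1/n$; to conclude $d(p,h)\le r+1/n$ one needs $d(p,p_i)\to 0$, so the convergence $p_i\to p$ must be taken in the sense that the forward distance $d(p,p_i)$ vanishes (in the symmetric proof the two directions coincide, which is exactly why this subtlety was invisible there). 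Once this convention is fixed, boundedness of $C_n$ follows from the defining inequalities $d(p,h)\le r+1/n$ and the remainder is an unchanged transcription of the previous proof. I expect verifying this alignment of directions to be the main obstacle, with no new mathematical content beyond it.
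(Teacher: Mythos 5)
Your proposal is correct and takes exactly the paper's route: the paper proves Corollary~\ref{cor:WeakHeine-Borel} by simply observing that the proof of Proposition~\ref{prop:HeineBorelMinimumBallProperty} never invokes symmetry, only nonnegativity, the (fixed-direction) ball definition, and the ordered triangle inequality, and then re-running it verbatim. Your additional remark that convergence must be read as $d(p,p_i)\to 0$ in the forward sense for the closedness step is a careful refinement of a point the paper leaves implicit, not a different argument.
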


We now introduce the characterization of an LP-type problem by Sharir and Welzl \cite{sharir1992combinatorial}.
   
\begin{definition} [LP-type]\label{def:LPtype}
    A pair $(H,f)$ is called an \textbf{LP-type problem} if $H$ is a finite set and $f$ is a function from subsets of $H$ to a totally ordered set such that
     $f$ satisfies the following two properties:
    \[Monotonicity: F\subset G \subset H \implies f(F)\leq f(G) \leq f(H).
    \]
    \[    Locality: F \subset G \subset H, x \in H, f(F) = f(G) = f(F \cup \{x\}) \implies f(F) = f(G \cup \{x\}).
    \]

\end{definition}

    Let $(\mathbf{X}, d)$ be a metric space satisfying the minimum ball property. Let $\mathbf{X}$ have some well-ordering $\preceq$. Take a finite set $H \subset \mathbf{X}$. We define:  
\[
f:2^{H} \rightarrow \RE \times \mathbf{X}, f(G)=\inf\{(r,p) \mid  G \subset B(p,r), \text{ordered lexicographically}\}
\]

Here, $f(G) = (\hat{r}, \hat{p})$ gives us a unique minimum radius enclosing ball of $G$, by taking the minimum center with respect to the well-ordering. We call this ball $B_G = B(\hat{p}, \hat{r})$. Note that if $f(G) = (r, p)$, there may exist other points $p'$ such that $G \subset B(r,p')$ as well; however, $f$ defines only one minimum radius enclosing ball. Note also that if $\mathbf{X} = \RE^n$ then by Proposition \ref{prop:WelltoLexOrder} we can take the lexicographic usual total order and $f$ will still be well-defined.

\begin{theorem}\label{thrm:MetricSpaceLP}
    The problem of finding the minimum radius ball of a set of points in a metric space that satisfies the minimum ball property is LP-type.
\end{theorem}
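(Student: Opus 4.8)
The plan is to verify directly that the pair $(H, f)$ satisfies the two defining properties of an LP-type problem, monotonicity and locality, with respect to the lexicographically ordered codomain $\RE \times \mathbf{X}$. Before checking these, I would first confirm that $f$ is well-defined, i.e. that the lexicographic infimum defining $f(G)$ is actually attained for every $G \subseteq H$. The minimum ball property guarantees that the infimum of the radii $\{r \mid \exists p, \ G \subset B(p,r)\}$ is attained, say at $\hat{r}$; the set of centers $\{p \mid G \subset B(p,\hat{r})\}$ is then nonempty, and since $\preceq$ is a well-ordering it has a least element $\hat{p}$. Because lexicographic order prioritizes the radius coordinate, $f(G) = (\hat{r}, \hat{p})$ with $G \subset B(\hat{p}, \hat{r})$, so $f$ returns a genuine, attained minimum enclosing ball $B_G = B(\hat{p}, \hat{r})$. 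This attainment is the one place where the minimum ball property is essential, and I would treat it as the key technical point of the argument.

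For monotonicity, suppose $F \subset G \subset H$. Any ball enclosing $G$ also encloses $F$, so the constraint set $\{(r,p) \mid G \subset B(p,r)\}$ is contained in $\{(r,p) \mid F \subset B(p,r)\}$. Taking an infimum over a smaller set can only increase it, so $f(F) \leq f(G)$ in the lexicographic order; applying the same reasoning to $G \subset H$ gives $f(G) \leq f(H)$.

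For locality, suppose $F \subset G \subset H$, $x \in H$, and $f(F) = f(G) = f(F \cup \{x\})$; write this common value as $(\hat{r}, \hat{p})$. From $f(G) = (\hat{r}, \hat{p})$ and well-definedness we obtain $G \subset B(\hat{p}, \hat{r})$, and from $f(F \cup \{x\}) = (\hat{r}, \hat{p})$ we obtain $x \in B(\hat{p}, \hat{r})$; hence $G \cup \{x\} \subset B(\hat{p}, \hat{r})$. This exhibits $(\hat{r}, \hat{p})$ as an attainable pair in the constraint set of $G \cup \{x\}$, so $f(G \cup \{x\}) \leq (\hat{r}, \hat{p})$. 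On the other hand, monotonicity applied to $G \subset G \cup \{x\}$ gives $(\hat{r}, \hat{p}) = f(G) \leq f(G \cup \{x\})$. Combining the two inequalities yields $f(G \cup \{x\}) = (\hat{r}, \hat{p}) = f(F)$, as required.

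The main obstacle is not locality itself — which, somewhat unusually, reduces to monotonicity together with the explicit enclosing ball $B(\hat{p}, \hat{r})$ — but rather the bookkeeping needed to ensure $f$ is well-defined and that every comparison respects the lexicographic order. In particular, I would be careful that the well-ordering $\preceq$ is genuinely used to pin down a unique center, so that $f(G)$ is a single attained pair rather than an infimum that might fail to be realized; the minimum ball property supplies exactly the attainment needed for the radius coordinate, and the two together make the codomain a totally ordered set on which the LP-type axioms can be checked.
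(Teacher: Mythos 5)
Your proof is correct and follows essentially the same route as the paper's: monotonicity via containment of the constraint sets, and locality by observing that $G \cup \{x\} \subset B(\hat{p},\hat{r})$ combined with monotonicity squeezes $f(G \cup \{x\})$ to the common value. Your explicit verification that the lexicographic infimum is attained (via the minimum ball property for the radius and the well-ordering for the center) is handled in the paper's setup text just before the theorem rather than inside its proof, but it is the same idea.
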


\begin{proof}
To see that $f$ satisfies \emph{monotonicity}, consider a finite set $H \subset \mathbf{X}$. Let $F\subset G\subset H$ and suppose $f(H) = (\hat{r}, \hat{p})$. Note that $G \subset H \subset B_H$ and thus $(\hat{r}, \hat{p}) \in \{(r, p) \mid G \subset B(p,r)\}$. It follows from the definition of $f$ that $f(G) \leq f(H)$. By the same logic, $f(F)\leq f(G)$.

To see that $f$ satisfies \emph{locality}, consider a finite set $H \subset \mathbf{X}$. Take $F,G \subset H$ and $x \in H$ such that $F\subset G\subset H$ and $f(F) = f(G) = f(F \cup \{x\})$. It follows that $B_F = B_G = B_{F \cup \{x\}}$. Hence, $x\in B_G$ and $ F \subset G \cup \{x\} \subset B_G = B_F$. Thus, by monotonicity, $f(F) \le f(G \cup \{x\}) \le f(G) = f(F)$, and we conclude that $f(F) = f(G \cup \{x\})$. 
\end{proof}

Note that the only properties of the metric space used for Theorem \ref{thrm:MetricSpaceLP} is the minimum ball property, which we've shown can also hold for weak metric spaces in Corollary \ref{cor:WeakHeine-Borel}, and that if $G \subset H$ then $G \subset B_H$, which still holds for weak metrics so long as the direction of the ball is fixed. As such, we contribute the following corollary.

\begin{corollary} \label{cor:WeakLp}
    Finding the minimum radius ball of a set of points in a weak metric space that satisfies the minimum ball property is LP-type.
\end{corollary}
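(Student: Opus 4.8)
The plan is to observe that the entire proof of Theorem~\ref{thrm:MetricSpaceLP} transports verbatim to the weak-metric setting, so the real work is to audit each step and confirm that none of them secretly invoked symmetry of $d$. First I would set up the objects exactly as in the symmetric case, but with the fixed-direction closed ball $B(p,r) = \{q \mid d(p,q) \le r\}$ of Definition~\ref{def:ClosedBall}, the minimum radius enclosing ball of Definition~\ref{def:minRadius}, and the selection function
\[
    f(G) ~ = ~ \inf\{(r,p) \mid G \subset B(p,r)\},
\]
ordered lexicographically with respect to a fixed well-ordering $\preceq$ of $\mathbf{X}$.

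Before touching the two LP-type axioms, I would check that $f$ is still well-defined. Existence of some center realizing the infimal radius $\hat{r}$ is precisely the minimum ball property, which Corollary~\ref{cor:WeakHeine-Borel} supplies for weak metric spaces enjoying the Heine-Borel property; among all centers attaining radius $\hat{r}$, the well-ordering $\preceq$ picks out a unique minimum, yielding a single ball $B_G$. This argument uses only that the admissible set of pairs $(r,p)$ is nonempty and bounded below in its first coordinate, together with the fact that $\preceq$ well-orders the second coordinate, so symmetry plays no role.

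I would then re-run monotonicity and locality. For \emph{monotonicity}, given $F \subset G \subset H$ with $f(H) = (\hat{r}, \hat{p})$, the inclusion $G \subset H \subset B_H$ exhibits $(\hat{r}, \hat{p})$ as an admissible pair for $G$, so $f(G) \le f(H)$, and likewise $f(F) \le f(G)$; the only ingredient is the purely set-theoretic implication $G \subset H \implies G \subset B_H$, which survives once the direction of the ball is fixed. For \emph{locality}, the hypothesis $f(F) = f(G) = f(F \cup \{x\})$ gives $B_F = B_G = B_{F \cup \{x\}}$, hence $x \in B_G$ and $F \subset G \cup \{x\} \subset B_G = B_F$; monotonicity then sandwiches $f(F) \le f(G \cup \{x\}) \le f(G) = f(F)$, forcing equality. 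At no point does the identity $d(a,b) = d(b,a)$ enter.

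The one point requiring care — rather than a genuine obstacle — is consistency of the fixed direction throughout. Because $d$ is asymmetric, ``enclosing'' must uniformly mean $d(\text{center},\text{point}) \le r$ and never the reverse; provided this convention is applied identically in $B(p,r)$, in $f$, and in the containment $G \subset B_H$, each step above is literally the argument of Theorem~\ref{thrm:MetricSpaceLP}. The corollary therefore follows by invoking Theorem~\ref{thrm:MetricSpaceLP} together with Corollary~\ref{cor:WeakHeine-Borel}, with the single added remark that symmetry was never used.
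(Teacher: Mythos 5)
Your proposal is correct and matches the paper's reasoning exactly: the paper justifies Corollary~\ref{cor:WeakLp} with precisely the observation that the proof of Theorem~\ref{thrm:MetricSpaceLP} uses only the minimum ball property and the implication $G \subset H \implies G \subset B_H$, both of which survive in a weak metric once the direction of the ball is fixed. One trivial remark: since the corollary already hypothesizes the minimum ball property, your appeal to Corollary~\ref{cor:WeakHeine-Borel} (and hence to the Heine-Borel property) is unnecessary, though harmless.
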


Note that a well-ordering is not always necessary and can be replaced with a total ordering, such as a lexicographic ordering. 

\begin{proposition}
\label{prop:WelltoLexOrder}
    If $X = \RE^k$  for $n < \infty$ then the well-order $\preceq$ on $\mathbf{X}$ can be replaced with the lexicographical usual total order $\le$ from $\RE^k$. 
    \end{proposition}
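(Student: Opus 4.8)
The plan is to show that, for $\mathbf{X} = \RE^k$, the lexicographic order over coordinates already makes the infimum defining $f$ attained, which is the only role the well order $\preceq$ played. Writing the combined order on $\RE \times \RE^k$ lexicographically (radius first, then the coordinates of the center in order), it suffices to prove that for every finite nonempty $G \subset \mathbf{X}$ the set $\{(r,p) \mid G \subset B(p,r)\}$ has a lexicographic minimum.

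First I would peel off the radius coordinate. By the minimum ball property, $\hat r := \inf\{r \mid \exists\, p,\ G \subset B(p,r)\}$ is attained, so the set of optimal centers $P_G := \{p \in \RE^k \mid G \subset B(p,\hat r)\}$ is nonempty. Since the order is radius-first, a pair $(\hat r, q)$ with $q$ the coordinate-lexicographic minimum of $P_G$ is a lower bound for $\{(r,p) \mid G \subset B(p,r)\}$, and it lies in that set; a lower bound belonging to the set is its minimum. Thus the problem reduces to showing that $P_G$ has a coordinate-lexicographic minimum.

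Next I would establish that $P_G$ is compact and extract its lexicographic minimum by iterated minimization. The set $P_G = \bigcap_{h \in G} \{p \mid d(p,h) \le \hat r\}$ is an intersection of finitely many closed balls; repeating the limit argument from the proof of Proposition~\ref{prop:HeineBorelMinimumBallProperty} shows it is closed, and it is bounded since it is contained in a single ball of radius $\hat r$ about any fixed $h \in G$. In $\RE^k$ this makes $P_G$ compact. Then I minimize the first coordinate projection $\pi_1$ over $P_G$; because $\pi_1$ is continuous and $P_G$ is compact, the minimum $x_1^\ast$ is attained, and the slice $P_G \cap \pi_1^{-1}(x_1^\ast)$ is again compact and nonempty. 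Iterating over $\pi_2, \dots, \pi_k$ yields, after $k$ steps, a single point $q$, which is exactly the coordinate-lexicographic minimum of $P_G$. Hence the infimum defining $f(G)$ is attained at $(\hat r, q)$, and $f$ remains well-defined under the lexicographic order.

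The main obstacle is purely topological: the iterated extreme-value argument requires $P_G$ to be compact in the topology under which the coordinate projections are continuous, namely the usual Euclidean topology on $\RE^k$, and it requires $P_G$ to be Euclidean-bounded. This is immediate when the metric $d$ induces the standard topology on $\RE^k$ and its bounded sets coincide with the Euclidean bounded sets, as in the ambient settings considered here; once this compactness is secured, the remaining steps are routine applications of the extreme value theorem.
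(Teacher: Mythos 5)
Your proof is correct and follows essentially the same route as the paper's: both identify the set of optimal centers (your $P_G$ coincides with the paper's $D=\bigcap_{n\in\NN} C_n$), establish its compactness in the standard topology of $\RE^k$, and extract the lexicographic minimum by iterated coordinate-wise minimization over nonempty compact slices. Your two additions---explicitly peeling off the radius coordinate to reduce to minimizing over $P_G$, and flagging that compactness in the metric $d$ must transfer to the Euclidean topology for the coordinate projections to be continuous---are refinements of details the paper leaves implicit rather than a genuinely different argument.
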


\begin{proof}

Recall, the following notation and results from Theorem \ref{prop:HeineBorelMinimumBallProperty}. 

$$C_n := \{ p \in \mathbf{X} : H \subset B(p, r + 1/n)\} \qquad\text{and}\qquad D: = \bigcap_n C_n$$ 

We saw that each $C_n$ is closed and bounded with respect to the metric topology. Thus it is compact in the standard (metric) topology in $\mathbb{R}^k$, and thus $D$, the set of centers of minimum radius enclosing balls is also nonempty and compact in this topology. 

Let $p_i: \RE^k \rightarrow \RE^d$ be the projection onto the $i^{th}$  component. Since $D$ is compact in the standard topology, its projection onto each component is compact and thus compact in the standard topology on $\RE$. We want to show that there is a minimal point with respect to the lexicographic order on $\RE^k$. 

 The body $p_1(D)$ is compact, and thus it has a minimal element, $x_1$.

Then $D_1: =p_1^{-1}(x_1) \bigcap D $ is non-empty and compact (in standard topology) since $D$ is compact and $p_1^{-1}(x_1)$ is closed.

We similarly recursively define $x_j $ to be the minimal element of $p_j(D_{j-1})$ and find $D_j:=p_j^{-1}(x_j) \bigcap D_{j-1}$ to be compact for $j \in \{2,..,k\}$. It follows that $(x_1, x_2, ...x_k)$ is the minimal element of $D$ with respect to the lexicographic order.  \end{proof}

\begin{corollary}
\label{cor:problem_is_LP_allmetrics}
    The minimum radius ball problem in the Hilbert and Thompson metrics, as well as the Funk weak metrics, is LP-type.
\end{corollary}

\begin{proof}
Note that Corollary \ref{lem:HilbertHB}, Proposition \ref{prop:HeineBorelMinimumBallProperty}, and Theorem \ref{thrm:MetricSpaceLP} together imply this result for the Hilbert and Thompson metrics. Note secondly that Corollary 10, Corollary \ref{cor:WeakHeine-Borel}, and Corollary \ref{cor:WeakLp} together imply the result for the Funk weak metrics.
\end{proof}

\section{Hilbert Radius Minimum Enclosing Ball}

We have shown in Corollary \ref{cor:problem_is_LP_allmetrics} that the minimum radius enclosing ball problem is LP-type in a Hilbert metric, so we will now focus on how to implement the LP algorithm. By 
Proposition \ref{prop:WelltoLexOrder} the well-order $\preceq$ on $\mathbf{X}$ can be replaced with the lexicographical usual total order $\le$ from $\RE^k$. This is useful as we can comprehend and compute using the lexicographic total order.

Running an LP algorithm on an LP-type problem requires solving two primitive operations \cite{matouvsek1992subexponential}. These are the \textbf{violation test} (given a basis $B\subset S$ and element $x\in S$, whether $f(B)=f(B\cup \{x\})$) and \textbf{basis computations} (how to find a basis of $B\cup \{x\}$). Instead of a well-order, we will let $\delta$ be the lexicographic order $\RE^2$ for the Hilbert convex polygons. We need the two following supporting lemmas:

\begin{lemma}
    The combinatorial dimension for minimum Hilbert radius balls is 3.
\end{lemma}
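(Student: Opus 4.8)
The plan is to show that the combinatorial dimension equals $3$ by bounding, in both directions, the size of a minimal defining subset (a \emph{basis}) for the minimum enclosing Hilbert ball. Recall that the combinatorial dimension is the maximum, over all finite $H$, of the size of a smallest subset $B \subseteq H$ with $f(B) = f(H)$. A Hilbert ball in an $m$-sided polygon $\Omega$ is itself a convex polygon (by the Nielsen--Shao lemma), and the optimization $f(G) = \inf\{(r,p) : G \subset B(p,r)\}$ minimizes the radius over admissible centers $p \in \interior(\Omega)$. Since a center ranges over the $2$-dimensional interior and the radius contributes one more degree of freedom, I expect the answer to be $3$, matching the Euclidean intuition where a minimum enclosing ball in $\RE^2$ is determined by at most $3$ points on its boundary.

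First I would establish the upper bound, $\dim \le 3$. The idea is that the minimum Hilbert ball is pinned down by the points of $H$ lying on its boundary $\bd B_H$. I would argue that at most three such contact points are needed to certify optimality: if the optimal center $\hat p$ could be perturbed (or the radius shrunk) while keeping all contact points enclosed, the ball would not be minimal, so the active contacts must block all $3$ degrees of freedom (two for the center, one for the radius). A clean way to formalize this is a local convexity / compactness argument at the optimum: the set of feasible centers for a fixed radius $r$ is the intersection over $h \in H$ of the regions $\{p : d(p,h) \le r\}$, and as $r$ decreases to $\hat r$ this feasible region degenerates to the single optimal center; by a Helly- or Carath\'eodory-type argument in the $(r,p)$ parameter space (which is $3$-dimensional), at most $3$ of the defining constraints are needed to force this degeneration. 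Hence some $B \subseteq H$ with $|B| \le 3$ already satisfies $f(B) = f(H)$.

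Next I would establish the lower bound, $\dim \ge 3$, by exhibiting a configuration $H$ whose minimum enclosing ball genuinely requires three points: that is, removing any single point strictly decreases the radius (or changes the lexicographically chosen center). Concretely, I would place three points in $\interior(\Omega)$ in ``general position'' with respect to the Hilbert geometry so that the minimum enclosing Hilbert ball has all three on its boundary and is not the minimum enclosing ball of any two of them. This mirrors the standard Euclidean witness (three points forming an acute-ish triangle whose circumscribing ball touches all three), transported into the Hilbert setting; continuity of the Hilbert metric in $\Omega$ (Lemma~\ref{lem:HilbertHB}, same topology as Euclidean) guarantees such robust configurations exist.

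The main obstacle I anticipate is the upper-bound argument. Unlike the Euclidean case, Hilbert balls are polygons, not smooth bodies, so the boundary has flat edges and corners, and the ``distance $\le r$'' constraints are not strictly convex in $p$. This means the clean smooth tangency count (each boundary contact kills one degree of freedom) can break down: a single point sitting on a flat edge of a Hilbert ball constrains the center differently than one at a vertex, and one must be careful that the active constraints are in sufficiently general position to span all three degrees of freedom. I would handle this by working in the $3$-dimensional parameter space $(r, p_1, p_2)$ and applying a Helly-type or linear-algebraic argument to the (locally) convex feasible region there, so that the polygonal, non-smooth nature of the balls is absorbed into the constraint geometry rather than requiring differentiability; the key fact to verify is that each point of $H$ contributes a convex constraint in this $3$-dimensional space, whence at most $3$ are needed to pin the optimum.
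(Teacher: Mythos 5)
Your route is genuinely different from the paper's: the paper disposes of this lemma in one line, citing the fact from \cite{gezalyan2023voronoi} that, in general position, at most three points determine a Hilbert ball, whereas you attempt a self-contained Helly-type argument. That is a legitimate ambition, but as written your upper bound has a real gap at precisely the step you flagged as ``the key fact to verify.'' Convexity of the constraint set $\{(r,p) : d(p,h) \le r\}$ in the $3$-dimensional parameter space is exactly the statement that the epigraph of $p \mapsto d(p,h)$ is convex, i.e., that the Hilbert distance to a fixed point is a convex function of $p$ with respect to the affine structure on $\Omega$. What Hilbert geometry gives you for free is only quasiconvexity: the sublevel sets (balls) are convex polygons. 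Convexity of the distance function along affine lines is a strictly stronger property that you do not establish and that is not a general feature of Hilbert geometries, so Helly in the joint $(r,p)$-space cannot be invoked as stated.

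The repair is to run Helly at a fixed radius, in $\RE^2$ rather than in $\RE^3$: by symmetry of the Hilbert metric, the feasible-center set for a point $h$ at radius $r$ is $\{p : d(p,h) \le r\} = B(h,r)$, a convex polygon, so Helly's theorem in the plane gives that if every triple of points of $H$ admits an enclosing ball of radius $r$, then so does all of $H$; letting $r$ decrease to the optimum and using compactness (available by Lemma~\ref{lem:HilbertHB}) shows the optimal radius for $H$ equals the optimal radius of some triple, yielding the bound $\le 3$ with no convexity of the distance function needed. One remaining subtlety, which neither you nor the paper resolves: the LP-type objective $f$ returns the pair $(\hat r, \hat p)$ with the lexicographically minimal center, and a triple achieving the optimal radius of $H$ need not have the same lexicographically minimal center, since minimum enclosing Hilbert balls are not unique; the paper sidesteps this through its general-position citation, and your argument should at least record the same caveat. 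Your lower-bound plan (three points whose minimum ball strictly shrinks when any one is removed) is sound and is in fact more than the paper itself spells out.
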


\begin{proof}
    This follows immediately from the fact that, in general position at most three points defines a Hilbert ball \cite{gezalyan2024delaunay}(see Lemma 14). This is because Hilbert balls intersect along line segment edges in general position. Three points cannot intersect along the same edge at the same distance without two of them lying along the same spoke.
\end{proof}

\begin{lemma}\label{lem:TwoPointCenter}
    Given an $m$ sided convex polygon $\Omega$ in $\RE^2$, the center of the minimum Hilbert radius ball around two points, $p,q \in \interior \Omega$, can be computed in time $O(\log m)$. 
\end{lemma}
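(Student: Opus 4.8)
The plan is to reduce the two-point problem to a one-dimensional computation along the chord $\overline{pq}$, and to spend the logarithmic cost only on locating where the supporting line of $p$ and $q$ meets $\bd\Omega$.

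First I would pin down where the optimal center lies. Because straight lines are geodesics in the Hilbert metric, the segment $\overline{pq}$ is a geodesic and $H_\Omega$ is additive along it. For any candidate center $c$, the triangle inequality gives
\[
    \max\bigl(H_\Omega(c,p),H_\Omega(c,q)\bigr)
        ~ \ge ~ \tfrac12\bigl(H_\Omega(c,p)+H_\Omega(c,q)\bigr)
        ~ \ge ~ \tfrac12 H_\Omega(p,q),
\]
so every ball enclosing $\{p,q\}$ has radius at least $\tfrac12 H_\Omega(p,q)$. This bound is attained exactly at the Hilbert midpoint of the chord, the point $c \in \overline{pq}$ with $H_\Omega(p,c)=H_\Omega(c,q)=\tfrac12 H_\Omega(p,q)$, which is therefore a minimum-radius center. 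Thus the whole task reduces to computing this midpoint.

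Second, I would make the midpoint computation explicit. Let $p'$ and $q'$ be the endpoints where the line through $p$ and $q$ meets $\bd\Omega$, ordered $\ang{p',p,q,q'}$. Choosing an affine coordinate $x$ along this line with $p'$ at $a$ and $q'$ at $b$, direct substitution into the cross-ratio formula shows that the map $\phi(x)=\ln\frac{x-a}{b-x}$ linearizes the Hilbert distance: $H_\Omega(u,v)=\tfrac12|\phi(u)-\phi(v)|$ for $u,v$ on the chord. Hence the Hilbert midpoint is simply the point $c$ with $\phi(c)=\tfrac12(\phi(p)+\phi(q))$, which inverts in closed form to $x_c=(a+b\,e^{\phi_c})/(1+e^{\phi_c})$ with $e^{\phi_c}=\sqrt{e^{\phi(p)}\,e^{\phi(q)}}$. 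Given $p'$ and $q'$, this is an $O(1)$ evaluation.

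All the cost therefore collapses onto finding the two boundary points $p'$ and $q'$, and this is the step I expect to be the real obstacle. Since $\Omega$ is a convex polygon, its $m$ vertices appear in sorted order, and the intersection of a query line with the boundary of a convex polygon can be found by binary search: the signed distance from the vertices to the line varies unimodally around the boundary, so one can locate in $O(\log m)$ time the two edges the line crosses and then intersect the line with those edges in $O(1)$. Combining the $O(\log m)$ boundary search with the $O(1)$ midpoint evaluation yields the claimed bound. The delicate part is arguing the $O(\log m)$ line/convex-polygon intersection carefully, including the treatment of the two crossing edges and degenerate incidences, rather than the metric computation itself, which is closed-form.
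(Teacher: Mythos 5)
Your reduction is correct as far as it goes: the triangle inequality plus additivity of $H_\Omega$ along the chord does show the optimal radius is exactly $r^* = \tfrac12 H_\Omega(p,q)$, your cross-ratio linearization $\phi(x)=\ln\frac{x-a}{b-x}$ and the closed-form midpoint are right, and the $O(\log m)$ line--polygon binary search is standard. But there is a genuine gap: the claim that the bound is ``attained exactly at the Hilbert midpoint of the chord'' is false as a uniqueness statement. The set of optimal centers is $\{c : H_\Omega(c,p)\le r^*,\, H_\Omega(c,q)\le r^*\} = B(p,r^*)\cap B(q,r^*)$, a convex set with empty interior (any interior point would have $\max(H_\Omega(c,p),H_\Omega(c,q))<r^*$, contradicting your own lower bound), hence a segment --- and in a polygonal Hilbert geometry this segment is generically nondegenerate, because Hilbert spheres are polygons and the two radius-$r^*$ spheres meet along a common edge piece. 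This is exactly what the paper uses: its proof locates this segment of the bisector in the sector between $p$ and $q$ and then \emph{chooses the lexicographically smallest point} on it. Your midpoint is merely the point where that segment crosses the chord, generally an interior point of it.

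This matters because the lemma is consumed by the LP-type machinery, where $f(G)$ is defined as the lexicographic infimum of pairs $(r,p)$ with $G\subset B(p,r)$, so ``the center'' means the lex-smallest optimal center, and the violation test of Lemma~\ref{lem:violationTest} decides $f(B)=f(B\cup\{x\})$ by testing membership of $x$ in that canonical ball $B_{\{y,z\}}$. With your midpoint-centered ball the test errs in both directions: if $x$ lies in the midpoint ball but outside the lex-min ball, the radius of the triple is still $r^*$ but the canonical center must move lexicographically, so $f(B\cup\{x\})\neq f(B)$ while your test reports no violation; symmetrically, $x$ could lie in the lex-min ball but outside the midpoint ball, producing a spurious violation. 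The fix is modest and keeps the stated running time: after your binary search identifies the sector (equivalently, the $O(1)$ relevant edges of $\Omega$), compute the flat piece of the $\{p,q\}$-bisector across that sector --- i.e., the segment $B(p,r^*)\cap B(q,r^*)$, available in closed form from the bisector equations the paper cites --- and return its lexicographically smallest endpoint rather than the chord midpoint.
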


\begin{proof}
    The balls of minimum radius for a set of two points are always centered on a section of the bisector in the sector directly between the two points. This is because Hilbert balls are polygons and the balls around two points meet at a segment across this sector\cite{nielsen2017balls} (see Lemma $10$ page $4$). This piece of the bisector is a line \cite{bumpus2023software}. We binary search the boundary of $\Omega$ to determine which edges define this sector, and then use the bisector equation there \cite{gezalyan2024delaunay,bumpus2023software} (see Section 3 in \cite{gezalyan2024delaunay}). We choose the lexicographically smallest point along the bisector in this sector to serve as the center of the minimum Hilbert radius ball around points $p, q$.
\end{proof}

\begin{lemma}\label{lem:violationTest}
    The \textbf{violation test} in the Hilbert metric for radius balls can be computed in time $O(\log^3 m)$.
\end{lemma}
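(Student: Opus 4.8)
The plan is to reduce the violation test to two subtasks: recomputing the minimum enclosing Hilbert ball $B_B = B(\hat{p},\hat{r})$ of the basis $B$, and then testing whether the query point $x$ lies inside it. First I would check that this reduction is correct. Writing $f(B) = (\hat{r},\hat{p})$, if $x \in B_B$ then $B_B$ already encloses $B \cup \{x\}$ with center $\hat{p}$ and radius $\hat{r}$, so by monotonicity $f(B) \le f(B\cup\{x\}) \le (\hat{r},\hat{p}) = f(B)$, whence $f(B) = f(B\cup\{x\})$ and $x$ does not violate. Conversely, if $x \notin B_B$ then the equality $f(B\cup\{x\}) = f(B)$ would force $B\cup\{x\} \subset B(\hat{p},\hat{r}) = B_B$, a contradiction; so $f$ strictly increases and $x$ violates. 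Thus the violation test is precisely the membership test $H_\Omega(\hat{p},x) \le \hat{r}$, once $B_B$ is known.

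Since the combinatorial dimension for minimum Hilbert balls is $3$, the basis $B$ has at most three points, so I only need to compute the minimum enclosing ball of one, two, or three points. The cases $|B| \le 1$ are immediate, and the case $|B| = 2$ is handled in $O(\log m)$ time by Lemma~\ref{lem:TwoPointCenter} together with one Hilbert distance evaluation to recover $\hat{r}$. The membership test itself is a single Hilbert distance computation $H_\Omega(\hat{p},x)$, which costs $O(\log m)$: locating the two edges of $\bd\Omega$ cut by the chord through $\hat{p}$ and $x$ is a binary search on the boundary, after which the cross-ratio formula is evaluated in $O(1)$. So the overall cost is dominated by the case $|B| = 3$.

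For three points I would locate the Hilbert circumcenter $\hat{p}$ as the intersection of two bisectors, say $\mathrm{Bis}(p_1,p_2)$ and $\mathrm{Bis}(p_1,p_3)$. Each bisector is piecewise (a line or a conic) with $O(m)$ pieces, one per sector cut out by the spokes emanating from the two defining points. The idea is to binary search along $\mathrm{Bis}(p_1,p_2)$ for the point where the signed difference $H_\Omega(c,p_3) - H_\Omega(c,p_1)$ vanishes. This yields three nested logarithmic factors: an outer binary search over the $O(m)$ sectors crossed by the bisector, giving $O(\log m)$ iterations; at each candidate $c$, a binary search on $\bd\Omega$ to identify the sectors and boundary edges needed to evaluate the two distances, costing $O(\log m)$; and the evaluation of the Hilbert distances themselves, each $O(\log m)$. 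Hence the circumcenter is found in $O(\log^3 m)$ time, after which one more distance comparison settles membership of $x$.

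The hard part will be justifying that the binary search along the bisector is valid, i.e.\ that the predicate $H_\Omega(c,p_3) - H_\Omega(c,p_1)$ changes sign exactly once as $c$ traverses $\mathrm{Bis}(p_1,p_2)$, so that the equidistant center is unique and can be monotonically bracketed. I expect this to follow from the convexity of Hilbert balls together with the quasi-convexity of $c \mapsto H_\Omega(c,p_i)$ along straight chords, but pinning down the monotonicity of the sign change across sector boundaries, where the distance formula switches from one piece to the next, is the delicate point and the one I would treat with the most care.
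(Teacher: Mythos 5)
Your reduction of the violation test to (i) recomputing the minimum enclosing ball of the basis and (ii) a single membership test, together with your monotonicity argument for its correctness, matches the structure of the paper's proof, and your treatment of the $|B|=2$ case via Lemma~\ref{lem:TwoPointCenter} is exactly the paper's Case~1. The divergence, and the genuine gap, is in the $|B|=3$ case. The paper does not construct a circumcenter algorithm at all: it invokes, as a black box, the $O(\log^3 m)$ algorithm for computing the Hilbert circumcenter of three points from~\cite{gezalyan2024delaunay}. You instead sketch a three-level nested binary search along the bisector $\mathrm{Bis}(p_1,p_2)$ for a zero of $H_\Omega(c,p_3)-H_\Omega(c,p_1)$, and you yourself flag that its essential ingredient---that this signed difference changes sign exactly once as $c$ traverses the bisector---is left unproven. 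That is not a deferrable detail: Hilbert bisectors are piecewise curves with $\Theta(m)$ pieces, the distance-to-$p_3$ function can behave differently on each piece, and without a monotonicity or unimodality guarantee the outer binary search over sectors has no correctness justification. Establishing exactly this kind of structure is the content of the cited algorithm, so as written your argument does not deliver the claimed $O(\log^3 m)$ bound.

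A second omission: your plan assumes the three-point equidistant center exists, but in the Hilbert metric it need not---the minimum enclosing ball of a triple may be pinned by only two of the points (as with obtuse triangles in the Euclidean setting), and the subroutine is also invoked on arbitrary triples during basis computation, not only on genuine bases. In that situation your binary search has no zero to bracket, or brackets a point that does not define the minimum ball. The paper handles this explicitly: if no center exists, one of the three elements lies in the ball defined by the other two, so it falls back to checking all pairs via Lemma~\ref{lem:TwoPointCenter} in $O(\log m)$ time and then tests $x$ against the resulting ball. Adding that fallback, and either citing the circumcenter computation of~\cite{gezalyan2024delaunay} or actually proving your sign-change claim, would close the gap.
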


\begin{proof}
    Given a basis $B$, we are interested in if $f(B)=f(B\cup \{x\})$. We assume that $x$ is not already in $B$. We have two nontrivial cases based on the size of the basis.

    Case 1: $B$ has two elements $y,z$. To determine if $f(B) = f(B \cup \{x\})$, it suffices to check whether $x$ is contained in the minimum radius ball of $y,z$.  If $x$ is contained, then $f(B) = f(B \cup \{x\})$. If $x$ is not contained, either the center must move or the radius must increase, so $f(B) < f(B \cup \{x\})$. Since we can compute the center of a $2$ point Hilbert ball in $O(\log m)$ time by Lemma \ref{lem:TwoPointCenter}, we can check this case in $O(\log m)$ time.

    Case 2: $B$ has three elements. In this case, we calculate the center of the Hilbert ball around the three points in time $O(\log^3 m)$ using the algorithm from "Delaunay Triangulations in the Hilbert Metric" \cite{gezalyan2024delaunay}. If a center exists, we compute the distance between $x$ and the center. If a center does not exist, one of the three elements is contained in a ball defined by the other two, so we check all pairs of elements using Lemma \ref{lem:TwoPointCenter} in $O(\log m)$ time to find the minimum enclosing ball. We take the resulting ball and check the center's distance to $x$. Either $x$ is in the ball or it is not, in which case $f(B)$ must increase because either the radius or center must move.
\end{proof}

\begin{lemma}
    The \textbf{basis computation} in the Hilbert metric for radius balls can be done in time $O(\log^3 m)$.
\end{lemma}

\begin{proof}
    Suppose we have a previous basis $B$ and a new element $x$ that is not contained inside the ball formed by the basis. We would like to compute the basis of $B \cup \{x\}$. This gives us a few cases. We will consider the non-trivial cases where $B$ contains two or three elements. In either situation, we compute the minimum enclosing balls of all pairs and triples of points in $B\cup \{x\}$, and check for containment of the remaining points in those balls. This can be done in $O(\log^3 m)$ as described in Case 3 of Lemma \ref{lem:violationTest}.   
\end{proof}

\begin{theorem}
    Hilbert minimum radius balls, of $n$ points, can be computed in time $O(n\log^3 m)$.
\end{theorem}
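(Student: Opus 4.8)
The plan is to combine the LP-type framework established earlier with the concrete primitives just proved, and invoke a standard LP-type solver as a black box. By Theorem~\ref{thrm:MetricSpaceLP} together with Lemma~\ref{lem:HilbertHB} (which gives the Heine-Borel property, hence the minimum ball property via Proposition~\ref{prop:HeineBorelMinimumBallProperty}), the minimum Hilbert radius ball problem on $n$ input points is an LP-type problem, and its combinatorial dimension is a constant, namely $3$. For any LP-type problem of fixed combinatorial dimension, the algorithm of Matou\v{s}ek, Sharir, and Welzl~\cite{matouvsek1992subexponential} (or equivalently Clarkson's / Welzl's randomized approach) solves the problem using an expected number of violation tests and basis computations that is linear in $n$, with the constant depending only on the combinatorial dimension. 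Since here the dimension is $3$, this number is $O(n)$.

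First I would state precisely the reduction: set $H$ to be the $n$ input points, let $f$ be the lexicographic cost function defined before Theorem~\ref{thrm:MetricSpaceLP}, and observe that all hypotheses of that theorem hold in the Hilbert geometry. Second, I would quote the performance guarantee of the generic LP-type solver, emphasizing that it performs $O(n)$ primitive operations in expectation when the combinatorial dimension is constant. Third, I would substitute the costs of the two primitives we have just established: each violation test runs in $O(\log^3 m)$ time by Lemma~\ref{lem:violationTest}, and each basis computation runs in $O(\log^3 m)$ time by the preceding lemma. Multiplying the $O(n)$ primitive calls by the $O(\log^3 m)$ cost per call yields the claimed $O(n \log^3 m)$ running time.

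The only genuinely delicate point, rather than a hard obstacle, is making sure the bookkeeping is honest: the LP-type solver must be one whose bound on the number of primitive operations is $O(n)$ for constant combinatorial dimension (the subexponential-in-the-dimension factor is absorbed into the constant since the dimension is fixed at $3$), and the per-operation cost $O(\log^3 m)$ must dominate the $O(1)$ overhead of the solver's own control logic. I would also note that the reported time is an expected bound, inherited from the randomization of the solver; the correctness is deterministic because every primitive is exact. A minor side remark worth including is that by Proposition~\ref{prop:WelltoLexOrder} we may use the ordinary lexicographic order on $\RE^2$ in place of an abstract well order, which is what makes the primitives actually computable.

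\begin{proof}
By Lemma~\ref{lem:HilbertHB} the Hilbert metric space $(\Omega, d)$ satisfies the Heine-Borel property, so by Proposition~\ref{prop:HeineBorelMinimumBallProperty} it satisfies the minimum ball property. Hence by Theorem~\ref{thrm:MetricSpaceLP} the minimum Hilbert radius ball problem on a set $H$ of $n$ points is LP type, with cost function $f$ as defined above, which by Proposition~\ref{prop:WelltoLexOrder} we may evaluate using the lexicographic order on $\RE^2$. Its combinatorial dimension is the constant $3$.

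A generic LP-type solver for a problem of constant combinatorial dimension~\cite{matouvsek1992subexponential} computes $f(H)$ using an expected $O(n)$ violation tests and basis computations, where the hidden constant depends only on the (fixed) combinatorial dimension. By Lemma~\ref{lem:violationTest} each violation test takes $O(\log^3 m)$ time, and by the preceding lemma each basis computation also takes $O(\log^3 m)$ time. Therefore the total expected running time is
\[
    O(n) \cdot O(\log^3 m) ~ = ~ O(n \log^3 m).
\]
The result is exact, as all primitives are computed exactly; only the number of primitive operations, and hence the running time, is bounded in expectation.
\end{proof}
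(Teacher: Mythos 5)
Your proof is correct and takes essentially the same route as the paper, which simply cites the running time of the primitives together with Sharir and Welzl's bounds~\cite{sharir1992combinatorial} for LP-type solvers of fixed combinatorial dimension; you expand that one-line citation into the standard bookkeeping, substituting the $O(\log^3 m)$ primitive costs into the $O(n)$ expected number of primitive calls. Your explicit caveat that the $O(n\log^3 m)$ bound is in expectation (inherited from the randomized solver) is an honest refinement the paper leaves implicit.
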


\begin{proof}
    This follows directly from the running time of our primitive operations and Theorem 7 and Corollary 8 from Sharir and Welzl's "A combinatorial bound for linear programming and related problems"\cite{sharir1992combinatorial}.
\end{proof}

\section{Conclusion}
In this paper, we contributed a criterion for showing that minimum radius balls are LP-type. This criterion being that finite sets in the metric space always have at least one minimum radius ball. We showed that if a metric space, or weak metric space, has the Heine-Borel property, it has this property. We used this to contribute a minimum radius ball algorithm for the Hilbert metric and proved that the minimum radius ball problem is LP-type for the Thompson metric and Funk weak metrics. 

\bibliography{shortcuts,hilbert}

\end{document}